\title{\LARGE \bf
Data-driven distributed control:\\ Virtual reference feedback tuning in dynamic networks
}
\author{Tom R.V. Steentjes, Mircea Lazar, Paul M.J. Van den Hof
\thanks{T.R.V. Steentjes, M. Lazar and P.M.J. Van den Hof are with the Department of Electrical Engineering, Eindhoven University of Technology, The Netherlands. E-mails: \texttt{\{t.r.v.steentjes, m.lazar, p.m.j.vandenhof\}@tue.nl}}
\thanks{This work is supported by the European Research Council (ERC), Advanced Research Grant SYSDYNET, under the European Unions Horizon 2020 research and innovation programme (grant agreement No. 694504).}
}
\newtheorem{thm}{Theorem}[section]}
\newtheorem{corollary}{Corollary}[section]}
\newtheorem{problem}{Problem}[section]}
\newtheorem{example}{Example}[section]}
\newtheorem{assumption}{Assumption}[section]}
\begin{document}
\maketitle
\begin{abstract}
In this paper, the problem of synthesizing a \emph{distributed} controller from data is considered, with the objective to optimize a model-reference control criterion. We establish an explicit ideal distributed controller that solves the model-reference control problem for a structured reference model. On the basis of input-output data collected from the interconnected system, a virtual experiment setup is constructed which leads to a network identification problem. We formulate a prediction-error identification criterion that has the same global optimum as the model-reference criterion, when the controller class contains the ideal distributed controller. The developed distributed controller synthesis method is illustrated on an academic example network of nine subsystems and the influence of the controller interconnection structure on the achieved closed-loop performance is analyzed.
\end{abstract}
\section{Introduction}
Control of interconnected systems is a challenging problem. Firstly, due to the spatial distribution or dimensionality, which prohibits the use of centralized controller design and implementation. Moreover, for many practical control applications, such as smart grids, smart buildings or industrial processes, dynamical models are not readily available, while measurement data is available with increased ease \cite{reviewarticle2017}. A relevant question is how to directly exploit the available data for distributed controller synthesis, without using a model.

Indeed, most of the methods to design distributed controllers are based on a model of the interconnected system, e.g. distributed model predictive control \cite{christofides2013}, distributed $H_2$ \cite{chen2019} and $H_\infty$ \cite{langbortetal2004} control. From a model-based perspective, a logical procedure is the data-driven modelling of the interconnected system and subsequent synthesis of the distributed controller based on the obtained model. Performance-oriented data-driven modelling of lumped systems has been founded in the field of `identification for control' \cite{vandenhofetal1995}. The development of data-driven modelling of interconnected systems \cite{vandenhofetal2013} opens the way for control-oriented identification for distributed control.

For data-driven control, the step of modelling the interconnected system may be circumvented, however, and the design of the distributed controller could be directly performed on the basis of data. Several methods have been developed for data-driven controller design, see e.g. \cite{hou2013} for an overview. A common feature of these methods is that they are based on the model-reference paradigm, wherein a reference model describes the desired behavior of the closed-loop system \cite{bazanella2011}. Virtual reference feedback tuning (VRFT) \cite{campi2002} is a `one-shot' method in which a model-reference criterion is optimized using a single batch of measurement data. Recent developments of `one-shot' data-driven methods include multi-variable VRFT~\cite{campestrini2016}, optimal multi-variable controller identification \cite{huff2019} and asymptotically exact multi-variable controller tuning \cite{formentin2015}. The step to the distributed case is of significant interest, given the potential of data-driven methods for interconnected systems~\cite{reviewarticle2017}.

In this paper, following the model-reference paradigm, we specify the desired behavior for the closed-loop system in terms of a \emph{structured} reference model. The accompanying control problem is to find a distributed controller which minimizes the model reference criterion. With the introduction of an ideal distributed controller, we provide an analogy to the ideal controller for the standard model-reference problem~\cite{bazanella2011}. Then, with the extension of VRFT to a distributed setting, we are able to synthesize a data-driven controller through dynamic network identification~\cite{vandenhofetal2013}. The contribution to distributed control is the direct data-driven design, as the synthesis of distributed controllers is typically model based. Regarding data-driven control, the contribution is the synthesis of distributed data-driven controllers with a priori defined structure and identification of distributed controllers via network identification techniques.

\section{Preliminaries}
Consider an undirected graph $\mathcal{G}=(\mathcal{V},\mathcal{E})$ with vertex set $\mathcal{V}$ of cardinality $L$ and edge set $\mathcal{E}\subseteq \mathcal{V}\times \mathcal{V}$. The neighbour set of vertex $i\in \mathcal{V}$ is defined as $\mathcal{N}_i:=\{j\in \mathcal{V}\,|\, (i,j)\in \mathcal{E}\}$. To each vertex $i\in \mathcal{V}$, we associate a linear discrete-time system with dynamics
\begin{align*}
y_i(t)&= G_i(q)u_i(t)+\displaystyle{\sum_{j\in \mathcal{N}_i}} W_{ij}(q)s_{ij}(t),\\
o_{ij}(t)& = F_{ij}(q)y_i(t),\qquad j\in \mathcal{N}_i,
\end{align*}
with $G_i$, $W_{ij}$, $F_{ij}$ rational transfer functions, $q$ the forward shift defined as $qx(t)=x(t+1)$, $u_i:\mathbb{Z}\to \mathbb{R}$ is the control input, $y_i:\mathbb{Z}\to \mathbb{R}$ the output, and $o_{ij},s_{ij}:\mathbb{Z}\to \mathbb{R}$ are variables through which the systems at vertices $(i,j)\in \mathcal{E}$ are interconnected. The problem that we consider is that of reference tracking, i.e., for each system it is desired that the output $y_i$ tracks a reference signal $r_i$. The tracking error for system $i$ is defined as $e_i:=r_i-y_i$. By stacking all incoming and outgoing interconnection variables of system $i$ in vectors $s_i$ and $o_i$, that is $s_i:=\operatorname{col}_{j\in \mathcal{N}_i} s_{ij}$ and $o_i:=\operatorname{col}_{j\in \mathcal{N}_i} o_{ij}$, we arrive at the following description for system $i$, denoted $\mathcal{P}_i$:
\begin{align} \label{eq:pi}
\mathcal{P}_i:\left\{\begin{array}{lll}
y_i&= &G_i(q)u_i+W_i(q)s_i,\\
o_{i}& = & F_{i}(q)y_i,\\
e_i&= & r_i-y_i,
\end{array}\right.
\end{align}
where $W_i:=\operatorname{row}_{j\in \mathcal{N}_i} W_{ij}$ and $F_i:=\operatorname{col}_{j\in \mathcal{N}_i} F_{ij}$, and the time $t$ is omitted for brevity. The interconnection of system $\mathcal{P}_i$ and $\mathcal{P}_j$, $(i,j)\in \mathcal{E}$, is defined by
\begin{align} \label{eq:sioi}
s_{ij}=o_{ji} \quad \text{ and } \quad s_{ji}=o_{ij}.
\end{align}

We consider a structured reference model described by
\begin{align} \label{eq:ki}
\mathcal{K}_i:\left\{\begin{array}{lll}
y_i^d & = &T_i(q)r_i+Q_i(q)k_i,\\
p_i & = & P_i(q)y_i^d,
\end{array}\right.
\end{align}
where $Q_i:=\operatorname{row}_{j\in \mathcal{N}_i} Q_{ij}$ and $P_i:=\operatorname{col}_{j\in \mathcal{N}_i} P_{ij}$ and the interconnection variables are similarly partitioned as for $\mathcal{P}_i$, i.e., $k_i:=\operatorname{col}_{j\in \mathcal{N}_i} k_{ij}$ and $p_i:=\operatorname{col}_{j\in \mathcal{N}_i} p_{ij}$. For each pair $(i,j)\in \mathcal{E}$ the interconnection of $\mathcal{K}_i$ and $\mathcal{K}_j$ is defined by
\begin{align} \label{eq:kipi}
k_{ij}=p_{ji} \quad \text{ and } \quad k_{ji}=p_{ij}.
\end{align}
Hence, $\mathcal{K}_i$ and $\mathcal{K}_j$ can only be interconnected if $\mathcal{P}_i$ and $\mathcal{P}_j$ are interconnected. A particular case of such a reference model occurs when a decoupled closed-loop system is desired, i.e., $Q_{ij}=0$ and $P_{ij}=0$, $i,j=1,2,\dots, L$.

For the control of the interconnected system described by \eqref{eq:pi} and  \eqref{eq:sioi}, we consider that each system $\mathcal{P}_i$ is associated with a (parametrized) controller $\mathcal{C}_i$, which is a linear discrete-time system that has the tracking error $e_i$ as an input, control input $u_i$ as an output and is interconnected with other controllers $\mathcal{C}_j$ through interconnection variables $\eta_{ij}$, $\zeta_{ij}$:
\begin{align*}
\mathcal{C}_i(\rho_i):\left\{\!\!\begin{array}{ll}
u_i \!\!\!\!\!&=  C_{ii}(q,\rho_i)e_i+\displaystyle{\sum_{j\in \mathcal{N}_i}} C_{ij}(q,\rho_i)\eta_{ij},\\
\zeta_{ij} \!\!\!\!\!&=   K_{ij}(q,\rho_i)e_i+\!\!\displaystyle{\sum_{h\in \mathcal{N}_i}} K_{ijh}(q,\rho_i)\eta_{ih},\, j\in \mathcal{N}_i.
\end{array}\right.
\end{align*}
The interconnection of $\mathcal{C}_i$ and $\mathcal{C}_j$, $(i,j)\in\mathcal{E}$ is defined by
\begin{align}
\eta_{ij}=\zeta_{ji} \quad \text{ and } \quad \eta_{ji}=\zeta_{ij}.
\end{align}
By defining $\eta_i:=\operatorname{col}_{j\in \mathcal{N}_i} \eta_{ij}$ and $\zeta_i:=\operatorname{col}_{j\in \mathcal{N}_i} \zeta_{ij}$, we compactly represent controller $i$ by
\begin{align} \label{eq:ci}
\mathcal{C}_i(\rho_i):\begin{bmatrix}
u_i\\ \zeta_i
\end{bmatrix}=C_i(q,\rho_i)\begin{bmatrix}
e_i\\ \eta_i
\end{bmatrix}.
\end{align}
An example of a reference model and controlled interconnected system is provided in Figure 1 for illustration purposes.
It is assumed that each controller matrix is parametrized linearly, i.e., $[C_i(q,\rho_i)]_{ij}=\rho_i^\top \overline{C}_{ij}(q)$ for some vector of transfer functions $\overline{C}_{ij}$. The family of parametrized controllers for node $i$ is $\mathscr{C}_i:=\{C_i(q,\rho_i)\,|\, \rho_i\in\mathbb{R}^{l_i}\}$.

By stacking all the interconnection variables of the interconnected system described by \eqref{eq:pi} and \eqref{eq:sioi} as $s:=\operatorname{col} (s_1,\dots, s_L)$ and $o:=\operatorname{col}(o_1,\dots,o_L)$, we can write
\begin{align*}
y=Gu+Ws,\quad o=Fy,\quad s=\Delta o,
\end{align*}
with $G=\operatorname{diag}(G_1,\dots, G_L)$, $W=\operatorname{diag}(W_1,\dots, W_L)$, $F=\operatorname{diag}(F_1,\dots, F_L)$ and the matrix $\Delta$ defined by aggregating \eqref{eq:sioi} for all corresponding index pairs. The input-output behavior of the network is $y=(I-W\Delta F)^{-1} Gu$.
\begin{assumption}
The interconnected system and reference model satisfy $\det(I-W\Delta F)\neq 0$ and $\det(I-Q\Delta P)\neq 0$.
\end{assumption}
\vspace{-1em}
\begin{assumption}
The reference model is such that $y^d\neq r$ for all non-zero $r$, i.e., $\det((I-Q\Delta P)^{-1}T-I)\neq 0$.
\end{assumption}

\begin{problem} \label{prob:ctrproblem}
Given the parametrized controllers $\mathcal{C}_i$ and the reference models $\mathcal{K}_i$, the considered distributed controller synthesis problem is  
\begin{align}
\min_{\rho_1,\dots,\rho_L}\!\!\!J_\text{MR}(\rho_1,\dots,\rho_L)=\!\!\!\min_{\rho_1,\dots,\rho_L}\sum_{i=1}^L\bar{E}[y_i^d(t)-y_i(t)]^2\!,\label{eq:ctrproblem}
\end{align}
where $\bar{E}:=\lim_{N\rightarrow \infty} \frac{1}{N}\sum_{t=1}^N E$ and $E$ is the expectation.
\end{problem}

\begin{figure}
\centering
\centerline{\includegraphics[scale=1]{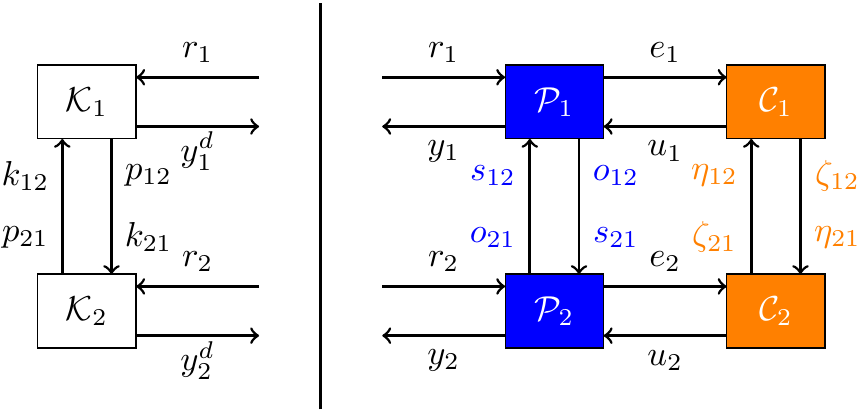}}
\caption{Structured reference model (left) and closed-loop network with distributed controller (right) for $L=2$.}
\vspace{-1em}
\label{fig:dstrbtdwmirror}
\end{figure}

\section{Ideal distributed controller synthesis} \label{sec:ideal}
A controller that admits the same structure as the interconnected system and for which the closed-loop network matches the structured reference model exactly, i.e., $y_i=y_i^d$ for all $i=1,\dots,L$, is called an \emph{ideal distributed controller}. To derive such an ideal controller, consider the interconnection of a subsystem $\mathcal{K}_i$ of the structured reference model with a subsystem $\mathcal{P}_i$ of the interconnected system, i.e.,
\begin{align} \label{eq:intkipi}
\begin{bmatrix}
y_i\\o_i\\e_i\\y_i^d\\p_i
\end{bmatrix}=\begin{bmatrix}
G_iu_i+W_is_i\\ F_iy_i\\ r_i-y_i\\T_ir_i+Q_ik_i\\ P_iy_i^d
\end{bmatrix}\quad \text{ and } \quad y_i^d=y_i.
\end{align}
Elimination of the variables $y_i^d$, $y_i$ and $r_i$ in \eqref{eq:intkipi} yields a local controller $\mathcal{C}_i^d$, described by (denoting $x_i$ by $x_i^c$ for the interconnection variables to distinguish controller variables from plant variables)
\begin{align} \label{eq:cid}
\mathcal{C}_i^d:\begin{bmatrix}
u_i\\ o_i^c\\ p_i^c
\end{bmatrix}\!\!=\!\!\underbrace{\begin{bmatrix}\dfrac{T_i}{G_i(1-T_i)} & -\dfrac{1}{G_i}W_i & \dfrac{1}{G_i(1-T_i)}Q_i\\ \dfrac{T_i}{1-T_i}F_i & 0 & \dfrac{1}{1-T_i}F_iQ_i\\ \dfrac{T_i}{1-T_i}P_i & 0 & \dfrac{1}{1-T_i}P_iQ_i
\end{bmatrix}}_{=:C_i^d(q)}\!\!
\begin{bmatrix}
e_i\\ s_i^c\\ k_i^c
\end{bmatrix}\!.\raisetag{10pt}
\end{align}
The distributed controller is constructed by interconnecting local controllers $\mathcal{C}_i^d$ and $\mathcal{C}_j^d$, $(i,j)\in\mathcal{E}$, as
\begin{align} \label{eq:sokpc}
\begin{bmatrix}
s_{ij}^c\\ k_{ij}^c
\end{bmatrix}=\begin{bmatrix}
o_{ji}^c\\ p_{ji}^c
\end{bmatrix}\quad \text{ and } \quad \begin{bmatrix}
s_{ji}^c\\ k_{ji}^c
\end{bmatrix}=\begin{bmatrix}
o_{ij}^c\\ p_{ij}^c
\end{bmatrix}
\end{align}

\begin{thm} \label{thm:idealdistrcontr}
The closed-loop network described by \eqref{eq:pi}~-~\eqref{eq:sioi} and the distributed controller \eqref{eq:cid}~-~\eqref{eq:sokpc} satisfies 
\begin{align*}
y_i=y_i^d,\qquad i=1,\dots, L.
\end{align*}
\end{thm}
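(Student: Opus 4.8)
\emph{Proof plan.} The plan is to show that the trajectory generated by the structured reference model is itself a valid trajectory of the closed loop, and then to invoke uniqueness of the closed-loop response. Fixing an external reference $r=\operatorname{col}(r_1,\dots,r_L)$, I would first take $y_i^d,k_i^d,p_i^d$ to be the signals solving the interconnected reference model \eqref{eq:ki}, \eqref{eq:kipi} driven by $r$ (unique, since $\det(I-Q\Delta P)\neq0$). I would then exhibit a choice of \emph{all} signals in the closed-loop network \eqref{eq:pi}--\eqref{eq:sioi}, \eqref{eq:cid}--\eqref{eq:sokpc} in which the plant output equals $y_i^d$, verify every closed-loop equation, and conclude $y_i=y_i^d$.

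For the construction I would set $y_i:=y_i^d$, $e_i:=r_i-y_i^d$ and $o_i:=F_iy_i^d$, propagate the plant interconnection $s_{ij}:=o_{ji}$ to get $s_i$, set the controller copies $k_i^c:=k_i^d$ and $s_i^c:=s_i$, and define $u_i,o_i^c,p_i^c$ as the outputs of $C_i^d(q)$ in \eqref{eq:cid} fed by $(e_i,s_i^c,k_i^c)$. The local computation behind \eqref{eq:cid} is that eliminating $r_i=e_i+y_i^d$ from $y_i^d=T_ir_i+Q_ik_i^d$ gives $(1-T_i)y_i^d=T_ie_i+Q_ik_i^d$ (well defined, as invertibility of $1-T_i$ is already presupposed by \eqref{eq:cid}); substituting this expression for $y_i^d$ into the three rows of $C_i^d(q)$ yields exactly $G_iu_i=y_i^d-W_is_i^c$, $o_i^c=F_iy_i^d$ and $p_i^c=P_iy_i^d=p_i^d$.

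It then remains to check the closed-loop equations. The plant relation reduces, using the first identity and $s_i^c=s_i$, to $G_iu_i+W_is_i=y_i^d=y_i$; the relations $o_i=F_iy_i$ and $e_i=r_i-y_i$ hold by construction. From the second identity $o_i^c=F_iy_i^d=F_iy_i=o_i$, so the controller interconnection $s_{ij}^c=o_{ji}^c$ coincides with the already-imposed $s_{ij}=o_{ji}$ and is therefore satisfied. From the third identity $p_i^c=p_i^d$, so the reference-model interconnection \eqref{eq:kipi} ($k_{ij}^d=p_{ji}^d$) gives $k_{ij}^c=k_{ij}^d=p_{ji}^d=p_{ji}^c$, i.e. the remaining part of \eqref{eq:sokpc}. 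Hence the candidate solves the closed-loop network; since that network is well posed, its response to $r$ is unique, so $y_i=y_i^d$ for all $i$.

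The step I expect to be the main obstacle is precisely this matching of the controllers' internal interconnection variables with the plant and reference-model ones. The identifications $s^c=s$, $o^c=o$ succeed only because row two of $C_i^d$ reproduces $o_i=F_iy_i^d$, and the identification of $(k^c,p^c)$ with the reference model's interconnection closes consistently only because $\mathcal{K}_i$ and $\mathcal{K}_j$ are connected with exactly the same incidence pattern as $\mathcal{P}_i$ and $\mathcal{P}_j$; without that structural hypothesis $k_{ij}^d=p_{ji}^d$ would fail to propagate through \eqref{eq:sokpc} and the reference model would not be reproduced.
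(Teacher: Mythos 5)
Your construction is correct, and every equation of the closed loop is indeed verified by the candidate trajectory you build; but your route is the converse of the paper's. The paper starts from an \emph{arbitrary} trajectory of the closed-loop network, introduces latent signals $y_i^c:=G_iu_i+W_is_i^c$ and $r_i^c:=e_i+y_i^c$ to show that the controller internally replicates a plant--reference-model interconnection, aggregates this to the explicit controller transfer \eqref{eq:cdext}, and then solves the outer loop algebraically to get $y=(I-Q\Delta P)^{-1}Tr$; no separate uniqueness argument is needed because the derivation applies to every solution. You instead exhibit \emph{one} solution with $y=y^d$ and appeal to well-posedness to conclude it is \emph{the} solution. This is more elementary (pure substitution, same key identity $(1-T_i)y_i^d=T_ie_i+Q_ik_i^d$ in both proofs), but it shifts the real work onto the sentence ``since that network is well posed, its response to $r$ is unique,'' which you assert without justification. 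That claim is not free: the controller \eqref{eq:cid}--\eqref{eq:sokpc} contains an internal algebraic loop through $k^c=\Delta p^c$, and its well-posedness amounts to $\det(I-T-Q\Delta P)\neq 0$, which — after multiplying by $\det(I-Q\Delta P)\neq 0$ from Assumption~2.1 — is precisely Assumption~2.2, $\det((I-Q\Delta P)^{-1}T-I)\neq 0$; the outer loop is then well posed because $I+M(I-M)^{-1}=(I-M)^{-1}$ with $M=(I-Q\Delta P)^{-1}T$ is invertible. You should make this chain explicit, since otherwise your argument only shows that $y=y^d$ is \emph{a} possible closed-loop response, not that it is the response. With that addition your proof is complete and is a legitimate alternative to the paper's forward derivation.
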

\begin{proof}
Let the control variables $(u_i,e_i)$ and controller interconnection variables $(s_i^c,o_i^c,k_i^c,p_i^c)$ satisfy \eqref{eq:cid} for all $i$ and $\eqref{eq:sokpc}$ for all $(i,j)\in \mathcal{E}$, i.e., $s^c=\Delta o^c$ and $k^c=\Delta p^c$. We will first show that there exist latent variables $r_i^c:\mathbb{Z}\to \mathbb{R}$ and $y_i^c:\mathbb{Z}\to \mathbb{R}$ for each $i$, so that
\begin{align} \label{eq:dagger}
\begin{bmatrix}
y_i^c\\ o_i^c\\ e_i\\ y_i^c\\ p_i^c
\end{bmatrix}=\begin{bmatrix}
G_iu_i+W_is_i^c\\ F_iy_i^c\\ r_i^c-y_i^c\\ T_ir_i^c+Q_ik_i^c\\ P_iy_i^c
\end{bmatrix}.
\end{align}
Define $y_i^c:=G_iu_i+W_is_i$ and $r_i^c:=e_i+y_i^c$. We then have to show that $y_i^c=T_ir_i^c+Q_ik_i^c$, $o_i^c=F_iy_i^c$ and $p_i^c=P_iy_i^c$. By \eqref{eq:cid} we have that
\begin{align*}
u_i&=\frac{T_i}{G_i(1-T_i)}e_i+\frac{1}{G_i(1-T_i)}Q_ik_i^c-\frac{1}{G_i}W_is_i^c\\
&\Leftrightarrow\\
&(1-T_i)G_iu_i=T_ie_i+Q_ik_i^c-(1-T_i)W_is_i^c,
\end{align*}
which, by the definition of $y_i^c$, is equivalent with
\begin{align} \label{eq:star}
(1-T_i)y_i^c=T_ie_i+Q_ik_i^c
\end{align}
and hence, by the definition of $r_i^c$, $y_i^c=T_ir_i^c+Q_ik_i^c$.
By \eqref{eq:star} and \eqref{eq:cid}, it follows that
\begin{align*}
o_i^c&=\frac{T_i}{1-T_i}F_ie_i+\frac{1}{1-T_i}F_iQ_ik_i^c=F_iy_i^c,\\
p_i^c&=\frac{T_i}{1-T_i}P_ie_i+\frac{1}{1-T_i}P_iQ_ik_i^c=P_iy_i^c.
\end{align*}
Next, define $y^c:=\operatorname{col}(y_1^c,\dots, y_L^c)$ and $u:=\operatorname{col}(u_1,u_2,\dots,u_L)$. It follows by \eqref{eq:dagger} that $y^c=Gu+Ws^c$ and $o^c=Fy$, such that, by $s^c=\Delta o^c$, $y^c=(I-W\Delta F)^{-1}Gu$. Similarly, define $r^c:=\operatorname{col}(r_1^c,\dots, r_L^c)$ to obtain $y^c=(I-Q\Delta P)^{-1}Tr^c$ by \eqref{eq:dagger}, with $Q=\operatorname{diag}(Q_1,\dots, Q_L)$, $P=\operatorname{diag}(P_1,\dots, P_L)$ and $T=\operatorname{diag}(T_1,\dots,T_L)$. Thus, using $e=r^c-y^c$, the controller satisfies
\begin{align}
u&=G^{-1}(I-W\Delta F)(I-Q\Delta P)^{-1}T\nonumber\\
&\qquad \times(I-(I-Q\Delta P)^{-1}T)^{-1}e. \label{eq:cdext}
\end{align}
Finally, the process $y=(I-W\Delta F)^{-1}Gu$ with $e=r-y$ and the controller \eqref{eq:cdext} yield $y=(I-Q\Delta P)^{-1}Tr=y_d$, which concludes the proof.
\end{proof}

Given a stable structured reference model, the ideal distributed controller will yield a stable closed-loop network in the sense that the transfer $r\to y_d$ is stable. Regarding the internal stability, the following conditions for attaining stable ideal controllers can be obtained from \eqref{eq:cid}:
\begin{enumerate}
\item If $G_i$ has non-minimum phase zeros, then these must also be zeros of $T_i$, $W_{ij}$ and $Q_{ij}$.
\item The unstable poles of $W_{ij}$ must also be poles of $G_i$.
\item The unstable poles of $F_{ij}$ must also be zeros of $T_i$.
\end{enumerate}
Condition 1) is in agreement with the single-process case, cf.~\cite{bazanella2011}.
The other consideration is involved with local controllers being causal, i.e., that the elements of the transfer matrices $C_i^d(q)$ have a non-negative relative degree. By analyzing \eqref{eq:cid}, we observe that the relative degree of non-zero $T_i$, $Q_{ij}$ and $W_{ij}$ must be larger than or equal to the relative degree of $G_i$.

In the sequel, it will be assumed that the ideal distributed controller belongs to the parametrized class of distributed controllers. This is formalized in the following assumption, where, for ease of exposition, we introduce the permutation matrices $P_i:=\operatorname{diag}(1,\overline{P}_i)$, $i=1,\dots,L$, such that $\operatorname{col}(s_i^c,k_i^c)=\overline{P}_i\operatorname{col}_{j\in \mathcal{N}_i} \operatorname{col}(s_{ij}^c,k_{ij}^c)$.

\begin{assumption}\label{as:cinc}
$P_i^\top C_i^d P_i\in\mathscr{C}_i$ for each $i=1,\dots,L$.
\end{assumption}
We associate $\rho_1^d,\dots, \rho_L^d$ with the ideal distributed controller, such that $P_i^\top C_i^dP_i=C_i(\rho_i^d)$ for $i=1,\dots, L$. By Theorem~\ref{thm:idealdistrcontr}, $(\rho_1^d,\dots,\rho_L^d)$ solves problem~\eqref{eq:ctrproblem}. The following simple example briefly illustrates the ideal distributed controller constructed in this section.
\begin{example} \label{xmpl:two}
Consider two coupled processes
\begin{align*}
y_1(t)&=G_1(q)u_1(t)+G_{12}(q)y_2(t),\\
y_2(t)&=G_2(q)u_2(t)+G_{21}(q)y_1(t),
\end{align*}
with transfer functions
\begin{align*}
G_1(q)&=\frac{c_1}{q-a_1},\quad G_{12}(q)=\frac{d_1}{q-a_1},\\
G_2(q)&=\frac{c_2}{q-a_2},\quad G_{21}(q)=\frac{d_2}{q-a_2}.
\end{align*}

The objective is that the closed-loop interconnected system behaves as two decoupled processes with first-order dynamics, according to
\begin{align} 
y_i^d(t)=T_i(q)r_i(t),\quad T_i(q)=\frac{1-\gamma_i}{q-\gamma_i},\quad i=1,2.\label{eq:T2}
\end{align}

Now, via \eqref{eq:cid}, we find that the ideal distributed controller is described by
\begin{align*}
\begin{bmatrix}
u_1\\o_1^c
\end{bmatrix}=\begin{bmatrix}
C_{11}^d & C_{12}^d\\ K_{12}^d & 0
\end{bmatrix}\begin{bmatrix}
e_1\\ s_1^c
\end{bmatrix}\text{ and }\begin{bmatrix}
u_2\\o_2^c
\end{bmatrix}=\begin{bmatrix}
C_{22}^d & C_{21}^d\\ K_{21}^d & 0
\end{bmatrix}\begin{bmatrix}
e_2\\ s_2^c
\end{bmatrix}
\end{align*}
with $e_i=r_i-y_i$, the interconnections $s_1^c=o_2^c$, $s_2^c=o_1^c$, and
\begin{align*}
C_{ii}^d(q) &= \frac{1-\gamma_i}{c_1}\frac{q-a_i}{q-1},\quad C_{ij}^d(q)=-\frac{d_i}{c_i},\\
 K_{ij}^d(q)&=\frac{1-\gamma_i}{q-1},\quad (i,j)\in \mathcal{E}.
\end{align*}
\end{example}

\section{Data-driven distributed controller} \label{sec:data}
In the remainder of this paper, we shall consider the case where $F_{ij}=1$ for all $i=1,\dots,L$, $j\in \mathcal{N}_i$. This implies that all systems are directly coupled through the outputs $y_i$:
\begin{align} \label{eq:net}
y_i =G_i(q)u_i+\sum_{j\in \mathcal{N}_i}W_{ij}(q)y_j,\quad i=1,\dots, L,
\end{align}
compactly written as $y=W_Iy+Gu$, with $[W_I]_{ij}=W_{ij}$. The interconnected system can always be represented as in \eqref{eq:net} without changing the transfer $u\to y$, by replacing $W_{ij}$ in \eqref{eq:net} by $\bar{W}_{ij}=W_{ij}F_{ji}$, since $s_{ij}=F_{ji}(q)y_j$ by \eqref{eq:pi}-\eqref{eq:sioi}.

The controller described in Section~\ref{sec:ideal} provides a solution to Problem~\ref{prob:ctrproblem}, but requires $\mathcal{P}_i$ to be given. The problem considered in this section, is the direct data-driven synthesis: given data $\{u_i,\,y_i\}$, $i=1,\dots, L$, solve problem~\eqref{eq:ctrproblem}.

We address this problem by two steps: virtual reference generation and distributed controller identification.

\subsection{Virtual reference generation}
Consider data $\{u_i,\,y_i\}$, $i=1,\dots, L$ collected from the network \eqref{eq:net}. This data can be obtained in closed loop with a stabilizing controller or in open loop if the network is stable, i.e., if $(I-W_I)^{-1}G$ is stable. For the reference model described by \eqref{eq:ki}-\eqref{eq:kipi}, we recall that $y_d=(I-Q\Delta P)^{-1}Tr$. Now, given $y_1,y_2,\dots, y_L$, consider the computation of the \emph{virtual reference} signals $\bar{r}_1,\bar{r}_2,\dots, \bar{r}_L$ according to the structured reference model as
\begin{align} \label{eq:rbar}
y=(I-Q\Delta P)^{-1}T
\bar{r}.
\end{align}
Then $\bar{r}_1,\bar{r}_2,\dots, \bar{r}_L$ are such that, when the network \eqref{eq:net} is in closed loop with the ideal distributed controller, fictitiously, the measured outputs $y_1,y_2,\dots,y_L$ are the corresponding outputs. Solving \eqref{eq:rbar} requires the data $y_1,y_2,\dots, y_L$ to be collected by a central governor. Because central data collection is not favourable, we propose to generate the virtual reference signals locally. This can always be done for the considered reference model, by determining the virtual reference signals $\bar{r}_1,\bar{r}_2,\dots, \bar{r}_L$ and the \emph{virtual interconnection} signals $\bar{p}_1,\bar{p}_2,\dots, \bar{p}_L$ according to \eqref{eq:ki} and \eqref{eq:kipi} so that
\begin{align*}
y_i=T_i\bar{r}_i+\sum_{j\in \mathcal{N}_i} Q_{ij}\bar{p}_{ji}
\quad \text{ and }\quad \bar{p}_{ij}=P_{ij}y_i,\quad j\in \mathcal{N}_i.
\end{align*}
Given a virtual reference signal $\bar{r}_i$, the corresponding virtual tracking error and, hence, the input to the ideal controller, is $\bar{e}_i=\bar{r}_i-y_i$. The virtual reference generation can thus be distributed, as summarized in Algorithm~\ref{alg:ref}.

 \begin{algorithm}
 \caption{Distributed virtual reference computation} \label{alg:ref}
 \begin{algorithmic}[1]
 \renewcommand{\algorithmicrequire}{\textbf{Input:}}
 \renewcommand{\algorithmicensure}{\textbf{Output:}}
 \REQUIRE Reference model transfer functions $T_i$, $Q_i$, $P_i$ and output data $y_i$ for $i=1,\dots, L$
 \ENSURE  Virtual signals $\bar{r}_i$, $\bar{e}_i$, $\bar{p}_i$ for $i=1,\dots, L$
  \FOR {$i = 1$ to $L$}
  \STATE Compute $\bar{p}_{i}$ such that $\bar{p}_{i}(t)=P_{i}(q)y_i(t)$.
  \ENDFOR
  \FOR {$i=1 $ to $L$}
  \STATE Receive $\bar{p}_{ji}$ from nodes $j\in \mathcal{N}_i$. Compute $\bar{r}_i$ such that
  \begin{align*}
  T_i(q)\bar{r}_i(t)=y_i(t)-\sum_{j\in \mathcal{N}_i} Q_{ij}(q)\bar{p}_{ji}(t).
  \end{align*}
  \STATE $\bar{e}_i\leftarrow\bar{r}_i-y_i$
  \ENDFOR
 \RETURN $\bar{r}_i$, $\bar{e}_i$, $\bar{p}_i$, $i=1,\dots, L$
 \end{algorithmic} 
 \end{algorithm}
 
 \begin{figure}[!t]
\centering
\includegraphics[scale = .72]{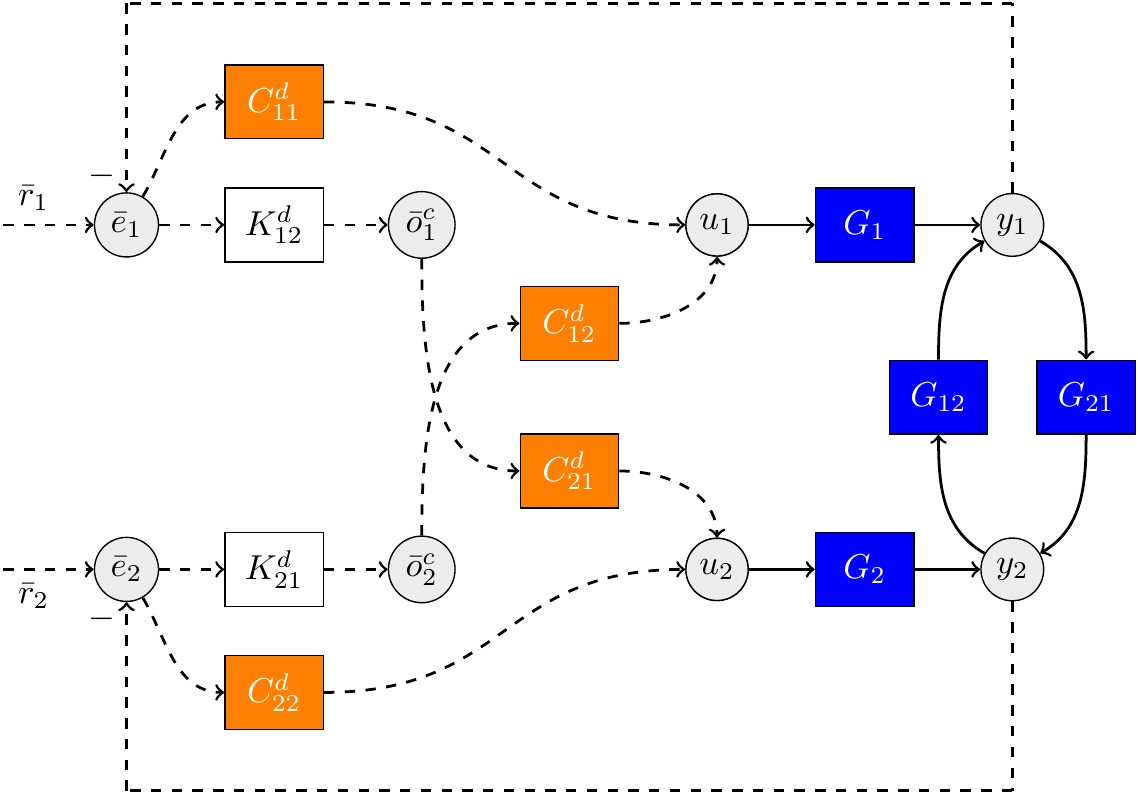}
\caption{Virtual experiment setup for identification of the ideal distributed controller in Example~\ref{xmpl:two}.}
\label{fig:virtexp}
\end{figure}

\subsection{Distributed controller identification}
Let us return to Example~\ref{xmpl:two}. Figure~\ref{fig:virtexp} shows the constructed virtual network that is obtained by following Algorithm~\ref{alg:ref}. The task of determining the controllers $\mathcal{C}_1^d$ and $\mathcal{C}_2^d$ now essentially becomes a dynamic network identification problem \cite{vandenhofetal2013}, where $\{C_{11}^d,C_{12}^d,K_{12}^d\}$ and $\{C_{22}^d,C_{21}^d,K_{21}^d\}$ are the modules to be identified (strictly speaking $\{C_{11}^d,C_{12}^d\}$ and $\{C_{22}^d,C_{21}^d\}$, since $K_{12}^d$ and $K_{21}^d$ are known). The signals $u_1$ and $u_2$ are directly available from the measurements, while $\bar{e}_1$ and $\bar{e}_2$ are virtual and obtained by Algorithm~\ref{alg:ref}. The virtual controller interconnection signals $\bar{o}_{ij}^c$ are obtained by filtering $\bar{e}_i$ as $\bar{o}_{12}^c=K_{12}^d\bar{e}_1$ and $\bar{o}_{21}^c=K_{21}^d\bar{e}_2$.

To illustrate the identification, consider the parametrized models $\{C_{11}(\rho_1),C_{12}(\rho_1)\}$, $\{C_{22}(\rho_2),C_{21}(\rho_2)\}$, the predictors
\begin{align} \label{eq:pred}
\hat{u}_1(\rho_1)&=C_{12}(\rho_1)\bar{o}_{21}^c+C_{11}(\rho_1)\bar{e}_1,\\
\hat{u}_2(\rho_2)&=C_{21}(\rho_2)\bar{o}_{12}^c+C_{22}(\rho_2)\bar{e}_2
\end{align}
and the identification criterion
\begin{align*}
J_\text{VR}(\rho_1,\rho_2)=\bar{E}[\varepsilon_1(\rho_1)]^2+\bar{E}[\varepsilon_2(\rho_2)]^2
\end{align*}
with $\varepsilon_i:=u_i-\hat{u}_i(\rho_i)$. We will now analyse the minima of $J_\text{VR}$. Since $\bar{o}_{12}^c=K_{12}^d\bar{e}_1$ and $\bar{o}_{21}^c=K_{21}^d\bar{e}_2$, it follows that
\begin{align*}
\varepsilon_1(\rho_1)&=(C_1^d-C_1(\rho_1))\bar{e}_1+(C_{12}^d-C_{12}(\rho_1))K_{21}^d \bar{e}_2,\\
\varepsilon_2(\rho_2)&=(C_2^d-C_2(\rho_1))\bar{e}_2+(C_{21}^d-C_{21}(\rho_2))K_{12}^d \bar{e}_1.
\end{align*}
Then, since $\bar{e}=(T^{-1}-I)(I-W_I)^{-1}Gu$, where $T=\operatorname{diag}(T_1,T_2)$, the prediction errors are
\begin{align*}
\begin{bmatrix}
\varepsilon_1(\rho_1)\\ \varepsilon_2(\rho_2)
\end{bmatrix}&=\begin{bmatrix}
C_{11}^d-C_{11}(\rho_1) & (C_{12}^d-C_{12}(\rho_1))K_{21}^d\\ (C_{21}^d-C_{21}(\rho_2))K_{12}^d & C_{22}^d-C_{22}(\rho_2)
\end{bmatrix}\\
&\quad \times(T^{-1}-I)(I-W_I)^{-1}Gu.
\end{align*}
It now appears that a global minimum of $J_{\text{VR}}$ is $(\rho_1,\rho_2)=(\rho_1^d,\rho_2^d)$ and that this minimum is unique if the control input signal $u=\operatorname{col}(u_1,u_2)$ from the experiment is persistently exciting of a sufficient order. Hence, the global minimum of $J_\text{VR}(\rho_1,\rho_2)$ is then the same as the global minimum of $J_\text{MR}(\rho_1,\rho_2)$, where $J_\text{VR}$ is quadratic in $\rho$ when the models are parametrized linearly in $\rho$. The distributed-controller synthesis problem is therefore reformulated as a network identification problem.

The latter reasoning for Example~\ref{xmpl:two} leads us to the following result for a general interconnected system:
\begin{thm}\label{thm:id}
Consider the predictor $\hat{u}_i(\rho_i):=C_{ii}(\rho_i)\bar{e}_i+\sum_{j\in \mathcal{N}_i} C_{ij}^W(\rho_i)\bar{o
}_{ji}^c+C_{ij}^Q(\rho_i)\bar{p}_{ji}$ with $\bar{o}_{ji}^c=(1-T_j)^{-1}T_j\bar{e}_j+\sum_{h\in \mathcal{N}_j}(1-T_j)^{-1}Q_{jh}\bar{p}_{hj}$. The identification criterion
\begin{align*}
J_i^\text{VR}(\rho_i)=\bar{E}[u_i-\hat{u}_i(\rho_i)]^2
\end{align*}
has a global minimum point at $\rho_i^d$ and this minimum is unique if the spectrum of $w_i=\operatorname{col}(\bar{e}_i, \operatorname{col}_{j\in \mathcal{N}_i} \bar{o}_{ji}^c, \operatorname{col}_{j\in \mathcal{N}_i} \bar{p}_{ji})$, denoted $\Phi_{w_i}(\omega)$, is positive definite for all $\omega\in[-\pi,\pi]$.
\end{thm}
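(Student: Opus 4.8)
The plan is to establish three facts: (i) the predictor at the ideal parameter reproduces the measured input, $\hat u_i(\rho_i^d)=u_i$, so that $J_i^{\text{VR}}(\rho_i^d)=0$ and, since $J_i^{\text{VR}}\geq 0$, $\rho_i^d$ is a global minimizer; (ii) the prediction error $\varepsilon_i(\rho_i)$ is linear in the parameter error $\rho_i^d-\rho_i$, so that $J_i^{\text{VR}}$ is a positive-semidefinite quadratic form; and (iii) the Hessian of that form is positive definite when $\Phi_{w_i}(\omega)\succ 0$ for all $\omega$, which yields uniqueness. Fact (i) is the local, data-side counterpart of Theorem~\ref{thm:idealdistrcontr}: the virtual signals generated by Algorithm~\ref{alg:ref} are consistent with the network being in closed loop with the ideal controller.

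To prove (i), specialize \eqref{eq:cid} to $F_{ij}=1$, so the controller blocks acting in $\hat u_i$ are, at the ideal parameter, $C_{ii}(\rho_i^d)=\tfrac{T_i}{G_i(1-T_i)}$, $C_{ij}^{W}(\rho_i^d)=-\tfrac{W_{ij}}{G_i}$ and $C_{ij}^{Q}(\rho_i^d)=\tfrac{Q_{ij}}{G_i(1-T_i)}$. From Algorithm~\ref{alg:ref} one reads $\bar p_{ij}=P_{ij}y_i$ and $T_i\bar r_i=y_i-\sum_{j\in\mathcal N_i}Q_{ij}\bar p_{ji}$, which with $\bar e_i=\bar r_i-y_i$ give the node-wise identity $(1-T_i)y_i=T_i\bar e_i+\sum_{j\in\mathcal N_i}Q_{ij}\bar p_{ji}$. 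Applying this at node $j$ in the theorem's definition of $\bar o_{ji}^c$ gives $\bar o_{ji}^c=y_j$. Substituting the three coefficient blocks together with $\bar o_{ji}^c=y_j$ into $\hat u_i(\rho_i^d)$, using the node-wise identity once more to combine the $\bar e_i$- and $\bar p_{ji}$-terms into $G_i^{-1}y_i$, and finally invoking the network equation $y_i-\sum_{j\in\mathcal N_i}W_{ij}y_j=G_iu_i$ from \eqref{eq:net}, collapses $\hat u_i(\rho_i^d)$ to $u_i$.

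For (ii), subtract $\hat u_i(\rho_i)$ from $u_i=\hat u_i(\rho_i^d)$; by the linear parametrization $[C_i(q,\rho_i)]_{ij}=\rho_i^\top\overline{C}_{ij}(q)$, each coefficient difference equals $(\rho_i^d-\rho_i)^\top$ times a fixed transfer function, hence $\varepsilon_i(\rho_i)=(\rho_i^d-\rho_i)^\top\Theta_i(q)w_i(t)$, where $\Theta_i$ collects the pertinent basis transfer functions and $w_i$ is the signal in the statement. Therefore $J_i^{\text{VR}}(\rho_i)=(\rho_i^d-\rho_i)^\top M_i(\rho_i^d-\rho_i)$ with $M_i:=\bar E\big[(\Theta_i(q)w_i)(\Theta_i(q)w_i)^\top\big]\succeq 0$, which re-proves (i) and shows (ii). For (iii), use Parseval: for $v\neq 0$, $v^\top M_i v=\tfrac1{2\pi}\int_{-\pi}^{\pi}H_v(e^{\mathrm j\omega})\Phi_{w_i}(\omega)H_v(e^{\mathrm j\omega})^{*}\,d\omega$ with $H_v:=v^\top\Theta_i$. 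Assuming, without loss of generality, that $\rho_i\mapsto C_i(q,\rho_i)$ is injective, $H_v$ is a nonzero rational vector, so $H_v(e^{\mathrm j\omega})\neq 0$ for all but finitely many $\omega$; since $\Phi_{w_i}(\omega)\succ 0$ there, the integrand is strictly positive on a set of positive measure, whence $v^\top M_i v>0$. Thus $M_i\succ 0$ and $\rho_i^d$ is the unique minimizer.

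I expect the main obstacle to be step (i): the chain of substitutions linking Algorithm~\ref{alg:ref}, the definition of $\bar o_{ji}^c$, the structure of $C_i^d$ in \eqref{eq:cid}, and the network equation \eqref{eq:net} must fit together exactly, as this is the point where the virtual construction is shown to reproduce the ideal closed loop on the recorded data. A minor additional point is the claim $H_v\not\equiv 0$ in (iii), which needs the parametrization to be non-redundant; otherwise $\Phi_{w_i}(\omega)\succ 0$ alone cannot rule out a non-unique minimizer.
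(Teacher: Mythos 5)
Your proof is correct, and it rests on the same central observation as the paper's: the virtual signals produced by Algorithm~\ref{alg:ref} coincide exactly with the signals that the ideal controller \eqref{eq:cid}--\eqref{eq:sokpc} would generate when driven by $\bar e_i$, so that $\hat u_i(\rho_i^d)=u_i$. The difference is in how the two arguments are completed. The paper states the signal identification ($\bar p_{ji}=p_{ji}^c$, $\bar o_{ji}^c=o_{ji}^c$) in one line and then delegates the entire consistency and uniqueness claim to Corollary~1 of \cite{vandenhofetal2013}, a standard result on prediction-error identification of a module in a dynamic network. You instead verify the signal identification by explicit substitution (your chain $(1-T_i)y_i=T_i\bar e_i+\sum_j Q_{ij}\bar p_{ji}$, $\bar o_{ji}^c=y_j$, and the collapse to $G_i^{-1}(y_i-\sum_j W_{ij}y_j)=u_i$ via \eqref{eq:net} is exactly right) and then replace the citation with a self-contained quadratic-form and Parseval argument. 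What your route buys is transparency: it makes visible that the uniqueness claim needs not only $\Phi_{w_i}(\omega)\succ 0$ but also non-redundancy of the linear parametrization ($v^\top\Theta_i\not\equiv 0$ for $v\neq 0$), a hypothesis the paper leaves implicit inside the cited corollary; your flagging of this is a genuine improvement in rigor. What the paper's route buys is brevity and alignment with the network-identification framework it builds on. No gap in your argument.
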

\begin{proof}
First, we note that $\bar{p}_{ji}=p_{ji}^c$ and $\bar{o}_{ji}^c=o_{ji}^c$, where $p_{ji}^c$ and $o_{ji}^c$ satisfy \eqref{eq:cid} and \eqref{eq:sokpc} for $e_i=\bar{e}_i$, $i=1,\dots,L$. Consequently, by Corollary~1 in \cite{vandenhofetal2013}, it follows that $\rho_i^d$ is  the unique global minimum point of $J_\text{VR}$.
\end{proof}
When the reference model is decoupled, the spectrum condition can be translated directly to the spectrum of the input:
\begin{corollary} \label{cor:id}
Let $P_i=0$, $Q_i=0$ and consider the predictors $\hat{u}_i^D(\rho_i):=C_{ii}(\rho_i)\bar{e}_i+\sum_{j\in \mathcal{N}_i} C_{ij}(\rho_i)\bar{o}_{ji}^c$, $i=1,2,\dots, L$. The identification criterion
\begin{align*}
J_\text{VR}(\rho_1,\dots,\rho_L)=\sum_{i=1}^L \bar{E}[u_i-\hat{u}_i^D(\rho_i)]^2
\end{align*}
has a global minimum point at $(\rho_1^d,\dots, \rho_L^d)$ and this minimum is unique if $\Phi_u(\omega)$ is positive definite for all $\omega\in[-\pi,\pi]$.
\end{corollary}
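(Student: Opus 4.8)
The plan is to read the statement off Theorem~\ref{thm:id} by combining two elementary observations: that the aggregate criterion $J_\text{VR}$ separates over the nodes, and that in the decoupled case each regressor $w_i$ is a full-rank linear filtering of the experimental input $u$.

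First I would specialize Theorem~\ref{thm:id}. Putting $P_i=0$ and $Q_i=0$ removes every $\bar p$-signal, so the predictor there becomes $\hat u_i(\rho_i)=C_{ii}(\rho_i)\bar e_i+\sum_{j\in\mathcal N_i}C_{ij}(\rho_i)\bar o_{ji}^c=\hat u_i^D(\rho_i)$ with $\bar o_{ji}^c=(1-T_j)^{-1}T_j\bar e_j$, and the regressor collapses to $w_i=\operatorname{col}\!\big(\bar e_i,\operatorname{col}_{j\in\mathcal N_i}\bar o_{ji}^c\big)$. Theorem~\ref{thm:id} then gives, for each $i$, that $J_i^\text{VR}(\rho_i)=\bar E[u_i-\hat u_i^D(\rho_i)]^2$ has a global minimum at $\rho_i^d$, unique when $\Phi_{w_i}(\omega)\succ0$ for all $\omega$. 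Since $J_\text{VR}(\rho_1,\dots,\rho_L)=\sum_{i=1}^L J_i^\text{VR}(\rho_i)$ and the variable $\rho_i$ appears only in the $i$-th summand, a tuple minimizes $J_\text{VR}$ exactly when each component minimizes its own $J_i^\text{VR}$; hence $(\rho_1^d,\dots,\rho_L^d)$ is a global minimum of $J_\text{VR}$, and it is the unique one whenever every $J_i^\text{VR}$ has a unique minimizer.

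It then remains to translate the requirement $\Phi_{w_i}(\omega)\succ0$ for all $i$ into the requirement $\Phi_u(\omega)\succ0$. With $Q_i=0$, Algorithm~\ref{alg:ref} yields $\bar r_i=T_i^{-1}y_i$, so $\bar e_i=(T_i^{-1}-1)y_i$, and therefore $\bar o_{ji}^c=(1-T_j)^{-1}T_j(T_j^{-1}-1)y_j=y_j$. Stacking and using $y=(I-W_I)^{-1}Gu$, I obtain $w_i=H_i(q)u$, where $H_i$ stacks the $i$-th row of $(I-W_I)^{-1}G$ scaled by $(T_i^{-1}-1)$ together with the rows of $(I-W_I)^{-1}G$ indexed by $\mathcal N_i$, so that $\Phi_{w_i}(\omega)=H_i(e^{j\omega})\Phi_u(\omega)H_i(e^{j\omega})^{*}$. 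Because $G$ is diagonal and invertible and $\det(I-W_I)\neq0$ (Assumption~1), $(I-W_I)^{-1}G$ is nonsingular, hence any selection of its rows has full row rank; scaling one of them by the nonzero rational function $T_i^{-1}-1$ (nonzero since $\det(T-I)\neq0$, Assumption~2) keeps $H_i$ of full row rank, so $\Phi_u(\omega)\succ0$ forces $\Phi_{w_i}(\omega)\succ0$, which closes the argument.

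The one delicate point I anticipate is this final rank step: $H_i$ has full row rank only as a matrix of rational transfer functions and may lose rank at isolated frequencies where the governing minor of $(I-W_I)^{-1}G$ or the factor $T_i^{-1}-1$ vanishes, at which $\Phi_{w_i}(\omega)$ is singular regardless of $\Phi_u$. I would deal with this either by reading the positive-definiteness condition in the ``almost all $\omega$'' sense — which is all that the uniqueness argument underlying Theorem~\ref{thm:id} really needs, since $\bar E[\varepsilon_i^2]=0$ already forces the error transfer function to vanish almost everywhere and hence identically — or by adding a mild non-degeneracy qualification on the reference model. Everything else, namely the separation of $J_\text{VR}$, the identity $\bar o_{ji}^c=y_j$, and the quadratic-in-$\rho$ structure under a linear parametrization, is routine.
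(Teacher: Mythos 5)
Your proposal is correct and follows exactly the route the paper intends (the corollary is stated without proof there): specialize Theorem~\ref{thm:id} to $P_i=Q_i=0$, observe that $J_\text{VR}$ separates over the nodes, note that $\bar{e}_i=(T_i^{-1}-1)y_i$ and $\bar{o}_{ji}^c=y_j$ so that $w_i=H_i(q)u$ with $H_i$ a full-row-rank selection of rows of $(I-W_I)^{-1}G$, and conclude $\Phi_{w_i}=H_i\Phi_uH_i^*\succ0$ from $\Phi_u\succ0$. Your closing caveat --- that $H_i(e^{j\omega})$ may lose rank at isolated frequencies, so the implication strictly holds only for almost all $\omega$, which still suffices because $\bar{E}[\varepsilon_i^2]=0$ forces the rational error transfer to vanish identically --- is a genuine refinement of a point the paper leaves implicit, and your resolution of it is sound.
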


The condition on $\Phi_u$ in Corollary~\ref{cor:id} can be realized by appropriate experiment design. The condition on $\Phi_{w_i}$ in Theorem~\ref{thm:id}, however, cannot always be realized by an appropriate design of $u$. For instance, consider Example~\ref{xmpl:two}, but now with non-zero $Q_i,P_i$. Then the number of entries of $w_1=\operatorname{col}(\bar{e}_1,\bar{o}_{21},\bar{p}_{21})$ is larger than the number of inputs in $u=\operatorname{col}(u_1,u_2)$, hence $\Phi_{w_1}(\omega)$ cannot be positive definite. We observe that the excitation condition can be relaxed if we do not require $\rho_i^d$ to be the only global minimum of $J_i^\text{VR}$.
\begin{corollary}
Each global minimum point $\rho_i^*$ of $J_i^\text{VR}$ satisfies $C_{ii}(\rho_i^*)=C_{ii}(\rho_i^d)$ and for all $j\in\mathcal{N}_i$:
\begin{align}\label{eq:condeq}
(C_{ij}^W(\rho_i^*)-C_{ij}^W(\rho_i^d))+(C_{ij}^Q(\rho_i^*)-C_{ij}^Q(\rho_i^d)P_{ji}=0
\end{align}
if $\Phi_{\xi_i}(\omega)$, $\xi_i=\operatorname{col}(\bar{e}_i, \operatorname{col}_{j\in \mathcal{N}_i} \bar{o}_{ji}^c)$, is positive definite for all $\omega\in[-\pi,\pi]$.
\end{corollary}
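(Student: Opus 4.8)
The plan is to combine the explicit form of the virtual signals with a standard Parseval argument, exactly as in the proofs of Theorem~\ref{thm:id} and Corollary~\ref{cor:id}, but keeping track of one extra algebraic identity. First I would observe that, under the standing assumption $F_{ij}=1$, the virtual signal $\bar{o}_{ji}^c$ of Theorem~\ref{thm:id} is simply the measured output $y_j$: substituting $\bar{e}_j=\bar{r}_j-y_j$ and $T_j\bar{r}_j=y_j-\sum_{h\in\mathcal{N}_j}Q_{jh}\bar{p}_{hj}$ from Algorithm~\ref{alg:ref} into $\bar{o}_{ji}^c=(1-T_j)^{-1}\big(T_j\bar{e}_j+\sum_{h\in\mathcal{N}_j}Q_{jh}\bar{p}_{hj}\big)$ gives $T_j\bar{e}_j+\sum_{h\in\mathcal{N}_j}Q_{jh}\bar{p}_{hj}=(1-T_j)y_j$, hence $\bar{o}_{ji}^c=y_j$. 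Since Algorithm~\ref{alg:ref} also yields $\bar{p}_{ji}=P_{ji}y_j$, this delivers the key identity $\bar{p}_{ji}=P_{ji}\,\bar{o}_{ji}^c$, which collapses the regressor $w_i$ of Theorem~\ref{thm:id} onto $\xi_i$.

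Using $\bar{p}_{ji}=P_{ji}\bar{o}_{ji}^c$, I would rewrite the predictor of Theorem~\ref{thm:id} as $\hat{u}_i(\rho_i)=C_{ii}(\rho_i)\bar{e}_i+\sum_{j\in\mathcal{N}_i}\big(C_{ij}^W(\rho_i)+C_{ij}^Q(\rho_i)P_{ji}\big)\bar{o}_{ji}^c$, which is linear in the regressor $\xi_i=\operatorname{col}(\bar{e}_i,\operatorname{col}_{j\in\mathcal{N}_i}\bar{o}_{ji}^c)$. Because $\rho_i^d$ makes $\hat{u}_i$ reproduce the first block row of \eqref{eq:cid} evaluated on the virtual signals (which, by the proof of Theorem~\ref{thm:id}, satisfy \eqref{eq:cid}), we have $\varepsilon_i(\rho_i^d)=u_i-\hat{u}_i(\rho_i^d)=0$; thus $J_i^\text{VR}(\rho_i^d)=0$ and every global minimum point $\rho_i^*$ satisfies $J_i^\text{VR}(\rho_i^*)=0$. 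Subtracting the two predictors gives $\varepsilon_i(\rho_i^*)=\hat{u}_i(\rho_i^d)-\hat{u}_i(\rho_i^*)=\Lambda_i(q)\xi_i$, with $\Lambda_i$ the row of transfer functions whose entries are $C_{ii}(\rho_i^d)-C_{ii}(\rho_i^*)$ and $\big(C_{ij}^W(\rho_i^d)-C_{ij}^W(\rho_i^*)\big)+\big(C_{ij}^Q(\rho_i^d)-C_{ij}^Q(\rho_i^*)\big)P_{ji}$ for $j\in\mathcal{N}_i$.

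Finally I would convert $\bar{E}[\varepsilon_i(\rho_i^*)]^2=0$ into $\Lambda_i\equiv 0$ by Parseval: $0=J_i^\text{VR}(\rho_i^*)=\tfrac{1}{2\pi}\int_{-\pi}^{\pi}\Lambda_i(e^{j\omega})\Phi_{\xi_i}(\omega)\Lambda_i(e^{j\omega})^{*}\,d\omega$, whose integrand is nonnegative since $\Phi_{\xi_i}(\omega)$ is Hermitian positive semidefinite; positive definiteness of $\Phi_{\xi_i}(\omega)$ on $[-\pi,\pi]$ then forces $\Lambda_i(e^{j\omega})=0$ for almost all $\omega$, and since $\Lambda_i$ is rational this means $\Lambda_i\equiv 0$. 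Equating the entries of $\Lambda_i$ to zero gives $C_{ii}(\rho_i^*)=C_{ii}(\rho_i^d)$ and \eqref{eq:condeq}.

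I expect the only genuinely new ingredient, relative to Theorem~\ref{thm:id} and Corollary~\ref{cor:id}, to be spotting the identity $\bar{o}_{ji}^c=y_j$, equivalently $\bar{p}_{ji}=P_{ji}\bar{o}_{ji}^c$: this is simultaneously what shrinks the effective regressor to $\xi_i$ and what produces the weighting $P_{ji}$ appearing in \eqref{eq:condeq}. Once it is available, the remaining spectral/Parseval reasoning is the one already carried out twice above; the only technical care needed is that the Parseval integrand be well behaved on the unit circle (continuity of the rational $\Lambda_i$ there) so that vanishing almost everywhere upgrades to the identically-zero conclusion for $\Lambda_i$.
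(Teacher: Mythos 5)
Your proof is correct, and it supplies an argument the paper itself omits: the corollary is stated without proof, and the surrounding text only hints at the mechanism by remarking that $\Phi_{w_i}$ cannot be positive definite. Your key identity $\bar{o}_{ji}^c=y_j$, hence $\bar{p}_{ji}=P_{ji}\bar{o}_{ji}^c$, is exactly the linear dependence the paper alludes to, and it is what both collapses the regressor $w_i$ onto $\xi_i$ and produces the weighted combination in \eqref{eq:condeq} (your version also implicitly fixes the unbalanced parenthesis in the paper's display, which should read $\bigl(C_{ij}^Q(\rho_i^*)-C_{ij}^Q(\rho_i^d)\bigr)P_{ji}$). The remaining steps --- $\varepsilon_i(\rho_i^d)=0$ under Assumption~\ref{as:cinc} in the noise-free setting, the Parseval identity, and positive definiteness of $\Phi_{\xi_i}$ forcing the rational row $\Lambda_i$ to vanish identically --- are the same spectral argument underlying Theorem~\ref{thm:id} and Corollary~\ref{cor:id}, so the proof is complete and consistent with the paper's framework.
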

It can verified that $(\rho_1^*,\dots,\rho_L^*)$, satisfying $C_{ii}(\rho_i^*)=C_{ii}(\rho_i^d)$ and \eqref{eq:condeq} for all $(i,j)\in\mathcal{E}$, is also a global minimum point of $J_{\text{MR}}$ and hence solves problem \eqref{eq:ctrproblem}.

Each identification criterion $J_i^\text{VR}(\rho_i)$, $i=1,\dots,L$, can be minimized separately. The required predictor inputs for node $i$ are the virtual signals obtained in Algorithm~\ref{alg:ref}, which are available locally ($\bar{e}_i$) or communicated by nodes $j\in \mathcal{N}_i$ ($\bar{o}_{ji}^c$ and $\bar{p}_{ji}$). Of course, $J_i^\text{VR}$ is not to be considered in practice since it involves expectations; for a finite number ($N$) of data, a solution $\rho_i$ is obtained by minimizing $\bar{J}_i^\text{VR}(\rho_i)=\sum_{t=1}^N (u_i(t)-\hat{u}_i(t,\rho_i))^2$. Observe the following difference with respect to multi-variable VRFT \cite{campestrini2016}. Instead of identifying the transfer from $\bar{e}$ to $u$, we identify the local controller dynamics $C_i^d$ by exploiting the structure of the interconnected system.

\section{Illustrative example: 9-systems network}
Consider the interconnected system \eqref{eq:net} with $L=9$ and the interconnection structure depicted in Figure~\ref{fig:P}. The transfer functions describing the dynamics are of order one and given by
\begin{align*}
G_i&=\frac{1}{q-a_i},\quad W_{ij}=\frac{0.1}{q-a_i},\quad i=1,\dots,9,
\end{align*}
with $a_i\in(0,1)$. It is desired to decouple the interconnected system and to have the same step response for every output channel. Hence the reference model is chosen as $y_i^d=T_i^d(q)r_i$, where
\begin{align*}
T_i(q)=\frac{0.4}{q-0.6},\quad i=1,\dots,9.
\end{align*}

We collect the data $\{u_i(t),y_i(t), t=1,2,\dots,100\}$ from \eqref{eq:net} in open-loop, with mutually uncorrelated white-noise input signals $u_i$ having a standard deviation of $\sigma_{u_i}=1$. Hence, we are in the situation of Corollary~\ref{cor:id}. Each controller $C_i$, $i=1,\dots, 9$, is parametrized such that Assumption~\ref{as:cinc} holds. Since there is no noise present in the output, the optimization of $\bar{J}_i^\text{VR}$ with predictors \eqref{eq:pred} yields the parameters $\rho_i^d$ and therefore $J_\text{MR}$ is equal to zero.

\begin{figure}[!t]
      \centering
      \subcaptionbox{Interconnected system\label{fig:P}}
        [.49\linewidth]{\includegraphics[scale=0.85]{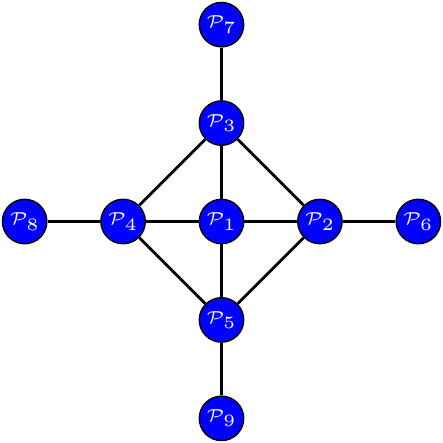}}
      \subcaptionbox{Distributed controller\label{fig:C}}
        [.49\linewidth]{\includegraphics[scale=0.85]{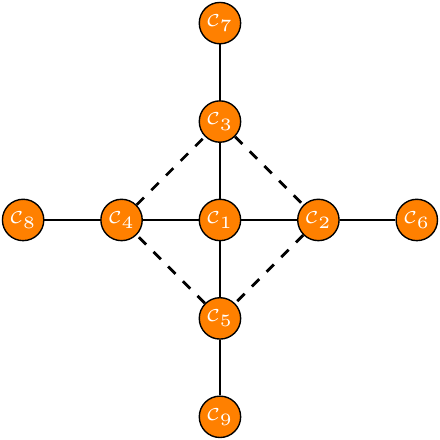}}
      \caption{Graph for the interconnected system (a) and distributed controller (b) that satisfies Assumption~\ref{as:cinc} (solid and dashed edges), that has reduced number of links (solid edges) and that is decentralized (no edges).}\label{fig:PC}\vspace*{-1em}
    \end{figure}

Next, we will analyze the situation where noise affects the system, by considering disturbed outputs $\tilde{y}_i(t)=y_i(t)+v_i(t)$ for the synthesis, with $v_i$ white-noise processes with standard deviations $\sigma_{v_i}=0.1$ that are mutually uncorrelated and uncorrelated with $u_i$. The method of generating virtual references and predictors is kept the same. The resulting distributed controller is interconnected with the plant and a step reference is applied to each subsystem simultaneously, with an amplitude between zero and one. Figure~\ref{fig:timeseriesy} shows the output response of the closed-loop network in red together with the response of the reference model (in black) on the left. We observe only a minor difference between the responses, due to the noise added to the data for identification, as shown in Figure~\ref{fig:timeseriesy} on the right.

\begin{figure}[!t]
\centering
\includegraphics[width=3.5in]{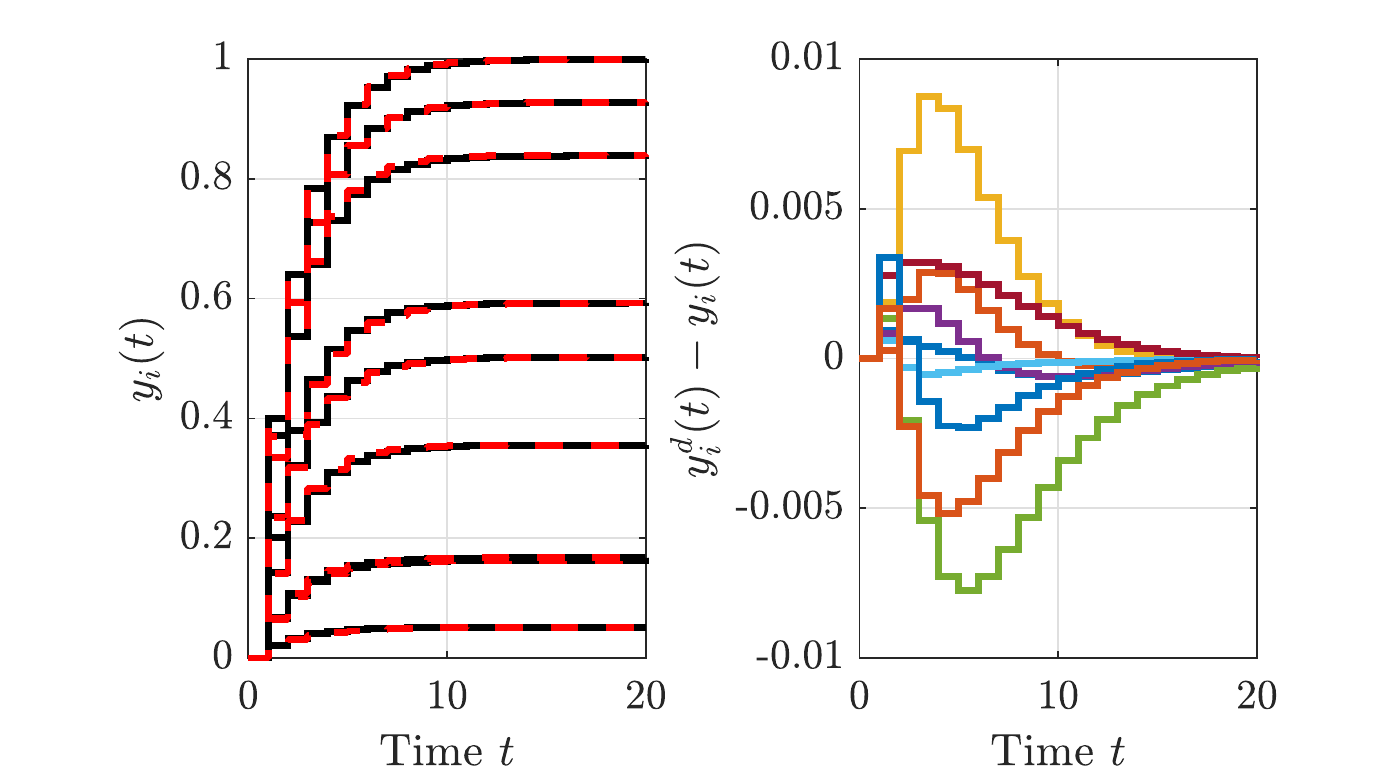}
\caption{Left: Closed-loop response of the network with the data-driven distributed controller (red) and the desired response of the structured reference model (black). Right: Response errors with respect to the desired outputs.}
\label{fig:timeseriesy} \vspace{-1em}
\end{figure}

The distribution of the error between the achieved closed-loop network with the identified distributed controller and the structured reference model resulting from 100 Monte Carlo runs is presented in Figure~\ref{fig:perfcomp}. Because $P_i^\top C_i^d P_i\in\mathscr{C}_i$ for all $i$, the error between the achieved closed-loop system and the reference model is only due to the noise. Assumption~\ref{as:cinc} does not always hold in practice. To illustrate such a situation, consider the case where four communication links are not present, represented in Figure~\ref{fig:C} by the dashed edges. The implication is that in the parametrization $C_{ij}(\rho_i)=0$ for the corresponding edges $(i,j)\in \mathcal{E}$ and, e.g., $\hat{u}_2(\rho_2)=C_{22}\bar{e}_2+C_{21}(\rho_2)\bar{o}_{12}^c+C_{26}(\rho_2)\bar{o}_{62}^c$. Note that links are thus not removed \emph{ a posteriori}, but the interconnection structure for the distributed controller is induced by the controller class. As shown by Figure~\ref{fig:perfcomp}, there is a significant performance degradation, because the controller class is not `rich' enough, although the graph for the controller remains connected. We finally consider the data-driven synthesis of a decentralized controller, corresponding to $C_{ij}(\rho_i)=0$ for all $(i,j)\in \mathcal{E}$. The resulting discrepancy between reference model and closed-loop network is plotted in Figure~\ref{fig:perfcomp} and shows a further decrease in performance.

\begin{figure}[!t]
\centering
\includegraphics[width=3.5in]{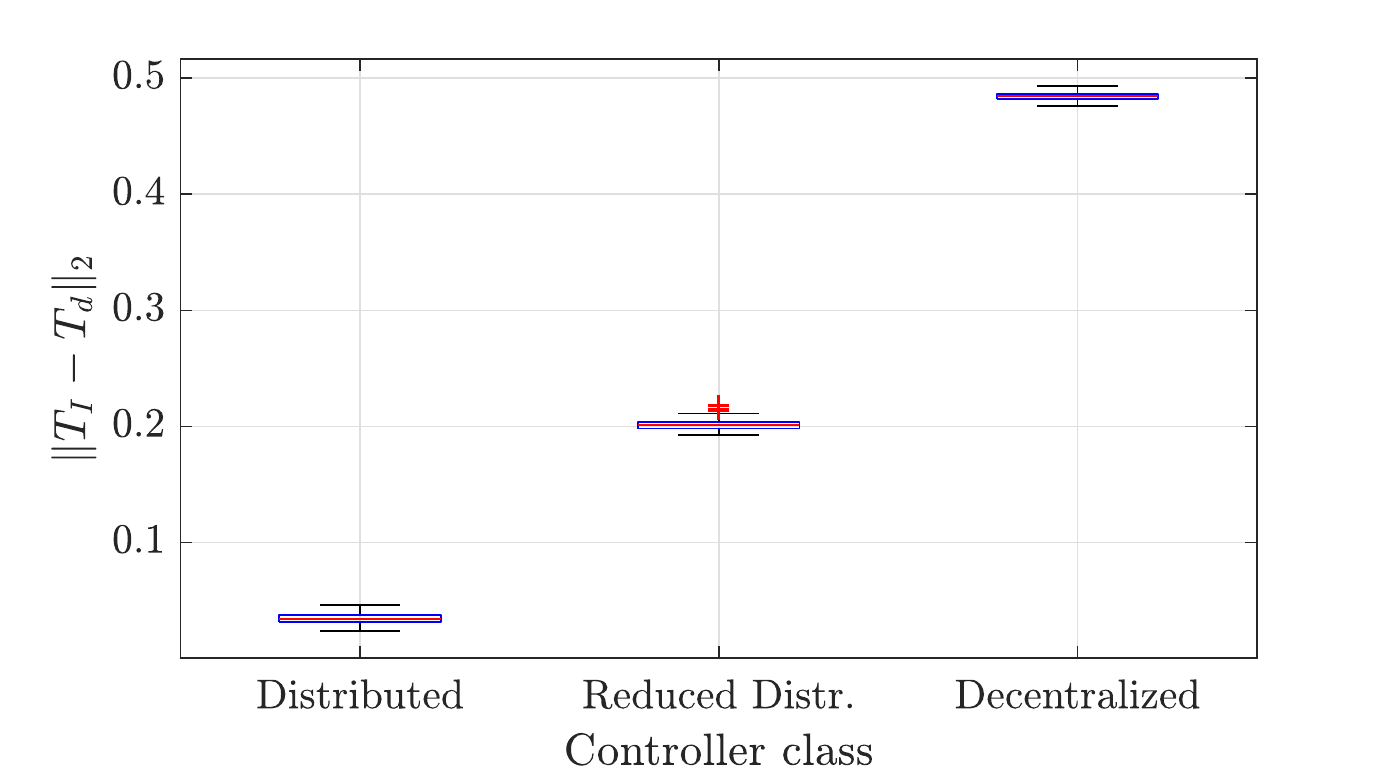}
\caption{Distribution of the achieved performance for various controller classes, where $T_I$ and $T_d$ denote the transfers $r\to y$ and $r\to y_d$, respectively.}
\label{fig:perfcomp}\vspace{-1em}
\end{figure}

\section{Concluding remarks}
We have presented a method for data-driven synthesis of distributed controllers for interconnected systems defined on arbitrary graphs. The method follows from two main steps: (i) the development of an ideal distributed controller class and (ii) the data-driven modelling of this controller via the identification of modules in a (virtual) dynamic network. Both the synthesis (identification) and implementation of the constructed data-driven distributed controllers can be distributed, which improves scalability compared to the standard multi-variable VRFT method.

In future work, we will consider distributed controller design in the presence of (process) noise sources in the network and for non-ideal controller classes.

\bibliographystyle{IEEEtran}
\bibliography{./rfrncs20}

\begin{thebibliography}{10}
\providecommand{\url}[1]{#1}
\csname url@samestyle\endcsname
\providecommand{\newblock}{\relax}
\providecommand{\bibinfo}[2]{#2}
\providecommand{\BIBentrySTDinterwordspacing}{\spaceskip=0pt\relax}
\providecommand{\BIBentryALTinterwordstretchfactor}{4}
\providecommand{\BIBentryALTinterwordspacing}{\spaceskip=\fontdimen2\font plus
\BIBentryALTinterwordstretchfactor\fontdimen3\font minus
  \fontdimen4\font\relax}
\providecommand{\BIBforeignlanguage}[2]{{%
\expandafter\ifx\csname l@#1\endcsname\relax
\typeout{** WARNING: IEEEtran.bst: No hyphenation pattern has been}%
\typeout{** loaded for the language `#1'. Using the pattern for}%
\typeout{** the default language instead.}%
\else
\language=\csname l@#1\endcsname
\fi
#2}}
\providecommand{\BIBdecl}{\relax}
\BIBdecl

\bibitem{reviewarticle2017}
F.~Lamnabhi-Lagarrigue, A.~Annaswamy, S.~Engell, A.~Isaksson, P.~Khargonekar,
  R.~M. Murray, H.~Nijmeijer, T.~Samad, D.~Tilbury, and P.~{Van den Hof},
  ``Systems \& control for the future of humanity, research agenda: Current and
  future roles, impact and grand challenges,'' \emph{Annual Reviews in
  Control}, vol.~43, pp. 1 -- 64, 2017.

\bibitem{christofides2013}
P.~D. Christofides, R.~Scattolini, D.~{Mu\~{n}oz de la Pe\~{n}a}, and J.~Liu,
  ``Distributed model predictive control: A tutorial review and future research
  directions,'' \emph{Computers \& Chemical Engineering}, vol.~51, pp. 21 --
  41, 2013.

\bibitem{chen2019}
X.~Chen, H.~Xu, and M.~Feng, ``${H_2}$ performance analysis and ${H_2}$
  distributed control design for systems interconnected over an arbitrary
  graph,'' \emph{Systems \& Control Letters}, vol. 124, pp. 1 -- 11, 2019.

\bibitem{langbortetal2004}
C.~Langbort, R.~S. Chandra, and R.~D'Andrea, ``Distributed control design for
  systems interconnected over an arbitrary graph,'' \emph{IEEE Transactions on
  Automatic Control}, vol.~49, no.~9, pp. 1502--1519, 2004.

\bibitem{vandenhofetal1995}
P.~M.~J. {Van den Hof} and R.~J.~P. {Schrama}, ``Identification and control:
  Closed-loop issues,'' \emph{Automatica}, vol.~31, no.~12, pp. 1751 -- 1770,
  1995.

\bibitem{vandenhofetal2013}
P.~M.~J. {Van den Hof}, A.~G. {Dankers}, P.~S.~C. {Heuberger}, and X.~Bombois,
  ``Identification of dynamic models in complex networks with prediction error
  methods -- {B}asic methods for consistent module estimates,''
  \emph{Automatica}, vol.~49, no.~10, pp. 2994 -- 3006, 2013.

\bibitem{hou2013}
Z.-S. Hou and Z.~Wang, ``From model-based control to data-driven control:
  Survey, classification and perspective,'' \emph{Information Sciences}, vol.
  235, pp. 3 -- 35, 2013.

\bibitem{bazanella2011}
A.~Bazanella, L.~Campestrini, and D.~Eckhard, \emph{Data-Driven Controller
  Design: The $H_2$ Approach}, ser. Communications and Control
  Engineering.\hskip 1em plus 0.5em minus 0.4em\relax Springer Netherlands,
  2011.

\bibitem{campi2002}
M.~Campi, A.~Lecchini, and S.~Savaresi, ``Virtual reference feedback tuning: a
  direct method for the design of feedback controllers,'' \emph{Automatica},
  vol.~38, no.~8, pp. 1337 -- 1346, 2002.

\bibitem{campestrini2016}
L.~Campestrini, D.~Eckhard, L.~A. Chia, and E.~Boeira, ``Unbiased {MIMO} {VRFT}
  with application to process control,'' \emph{Journal of Process Control},
  vol.~39, pp. 35 -- 49, 2016.

\bibitem{huff2019}
D.~D. Huff, L.~Campestrini, G.~R. Gon{\c c}alves~da Silva, and A.~S. Bazanella,
  ``Data-driven control design by prediction error identification for
  multivariable systems,'' \emph{Journal of Control, Automation and Electrical
  Systems}, vol.~30, no.~4, pp. 465--478, 2019.

\bibitem{formentin2015}
S.~Formentin, A.~Bisoffi, and T.~Oomen, ``Asymptotically exact direct
  data-driven multivariable controller tuning,'' \emph{IFAC-PapersOnLine},
  vol.~48, no.~28, pp. 1349 -- 1354, 2015, 17th IFAC Symposium on System
  Identification SYSID 2015.

\end{thebibliography}

\end{document}